
\documentclass[]{interact}

\usepackage{mybinomcommon}

\begin{document}


\title{Parallelizing Computation of Expected Values in Recombinant Binomial Trees}

\author{
\name{Sai~K. Popuri\textsuperscript{a}\thanks{CONTACT Sai~K. Popuri. Email: saiku1@umbc.edu}, Andrew~M. Raim\thanks{Disclaimer: This article is released to inform interested parties of ongoing research and to encourage discussion of work in progress. Any views expressed are those of the authors and not necessarily those of the U.S. Census Bureau.}\textsuperscript{b}, Nagaraj~K. Neerchal\textsuperscript{a}, and Matthias~K. Gobbert\textsuperscript{a}}
\affil{\textsuperscript{a}Department of Mathematics and Statistics, University of Maryland, Baltimore County, 1000 Hilltop Circle, Baltimore, MD 21250, USA; \textsuperscript{b}Center for Statistical Research \& Methodology, U.S. Census Bureau, 4600 Silver Hill Road, Washington, DC 20233, USA}
}

\maketitle

\begin{abstract}
Recombinant binomial trees are binary trees where each non-leaf node has two child nodes, but adjacent parents share a common child node. Such trees arise in finance when pricing an option. For example, valuation of a financial option can be carried out by evaluating the expected value of asset payoffs with respect to random paths in the tree. In many variants of the option valuation problem, a closed form solution cannot be obtained and computational methods are needed. The cost to exactly compute expected values over random paths grows exponentially in the depth of the tree, rendering a serial computation of one branch at a time impractical. We propose a parallelization method that transforms the calculation of the expected value into an embarrassingly parallel problem by mapping the branches of the binomial tree to the processes in a multiprocessor computing environment. We also discuss a parallel Monte Carlo method which takes advantage of the mapping to achieve a reduced variance over the basic Monte Carlo estimator. Performance results from \proglang{R} and \proglang{Julia} implementations of the parallelization method on a distributed computing cluster indicate that both the implementations are scalable, but \proglang{Julia} is significantly faster than a similarly written \proglang{R} code. A simulation study is carried out to verify the convergence and the variance reduction behavior in the parallel Monte Carlo method.
\end{abstract}

\begin{keywords}
Binomial tree, Bernoulli paths, Monte Carlo estimation, Option pricing.
\end{keywords}

\section{Introduction}
\label{sec:intro}

An $N$-step recombinant binomial tree is a binary tree where each non-leaf node has two children, which we will label ``up'' and ``down''. The tree has depth $N$, so that any path from the root node to a leaf node consists of $N$ up or down steps. The tree is called recombinant because the sequence of moves (up, down) is assumed to be equivalent to the sequence (down, up). In such a tree, there are $N+1$ distinct leaf nodes and $1 + 2 + \cdots + (N+1) = (N+1)(N+2)/2$ nodes overall. Any particular path from the root to a leaf can be written as a binary sequence $\vec{x} = (x_1, \ldots, x_N)$ where $x_j \in \mathbb{B}$, $\mathbb{B} = \{0, 1\}$, and 1 corresponds to an up movement while 0 corresponds to down. Given a density $p(\vec{x}) = \Prob(\vec{X} = \vec{x})$, we may consider $\vec{X}$ as a random path from the root to a leaf. We will refer to random variables $\vec{X} \in \mathbb{B}^N$ as Bernoulli paths. An $N$-step binomial tree has $2^N$ Bernoulli paths. 

A primary example of recombinant binomial trees is the binomial options pricing model proposed by Cox et al. \cite{CoxRossRubinstein1979}. This model accounts for uncertainty of a future stock price based on its current market price at $S$. Figure \ref{fig:2step} illustrates a binomial options model for the evolution of the stock in $N = 2$ time periods. Starting from the root node, the stock price moves up by an amount $u$ to $Su$ with probability $p$ or moves down to $S/u$ with probability $1-p$. After one step, each of the two child nodes further branch to two leaf nodes where a factor of $u$ is applied with probability $p$ or $d$ is applied with probability $1-p$. Here, the sequences (up, down) and (down, up) both take the stock price back to its starting price.

\begin{figure} \centering
	\includegraphics[width=0.65\textwidth]{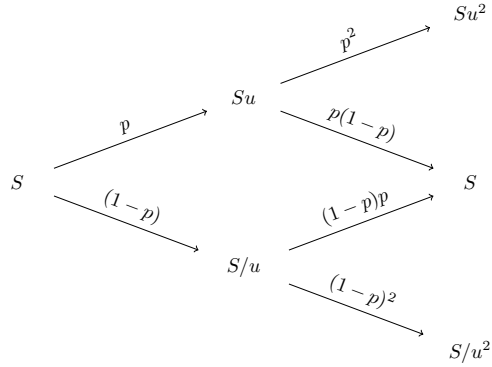}
	\caption{A two-step recombinant binomial tree.}
	\label{fig:2step}
\end{figure}

The binomial options pricing model is used in the valuation of financial contracts like options, which derive their value from a less complicated, underlying asset such as a stock price. In order to calculate the value of an option, one builds a recombinant binomial tree to a future time point from the current market price of the stock $S$ using a Bernoulli probability model at each time step. Depending on the type of the option, the option value is either the present value of the expected option payoff or is calculated by traversing the tree backwards and revising the option value at each step. See Hull \cite{Hull} and Seydel \cite{Seydel} for more details on options and their valuation. When the option payoff at a leaf node depends on the path, one must consider all $2^N$ possible paths to calculate the expected value of the option payoff. 

Pattern-mixture models for missing longitudinal data provide a second example involving recombinant binomial trees. A brief overview is given here, while the remainder of the paper focuses on the options pricing application. In a pattern-mixture model \cite{Little1993}, longitudinal data with missing values is available for each subject and the conditional distribution of the data given the pattern of missingness is considered. Let $Y_{it}$ be the response from subject $i$ at time $t$, where $i = 1, \ldots, n$ and $t = 1, \ldots, T$. The multivariate response $\vec{Y}_i = (Y_{i1}, \ldots, Y_{iT})$ may contain missing data whose  pattern is denoted by $\vec{Z}_i = (Z_{i1}, \ldots, Z_{iT})$; $Z_{it}$ is $0$ if $Y_{it}$ is observed and $1$ if missing. Hosseini and Neerchal \cite{HosseiniJSM2016} have adapted this framework to gerontological studies where caregivers provide responses on behalf of patients on some occasions, and patients themselves respond at other times. 
The joint distribution of the observed $\{ (\vec{Y}_i, \vec{Z}_i) : i = 1, \ldots, n \}$ for such a model is given by
\begin{equation}
\prod_{i=1}^n f(\vec{y}_i\mid \vec{z}_i,\vec{\theta})g(\vec{z}_i\mid \vec{\theta}),
\label{eq:pattern-mixture-likeli}
\end{equation}
where $f$ and $g$ are the probability functions of $\vec{y}_i\mid \vec{z}_i$ and $\vec{z}_i$, respectively. Note that the expected 
value calculations with respect to $\vec{z}$ will involve summing over all Bernoulli paths $\vec{z}$.

In applications of recombinant binomial trees, such as the two previously mentioned, it is often required to compute the expected value of a function $V(\vec{X})$
\begin{equation}
\E[V(\vec{X})] =
\sum_{\vec{x}\in \mathbb{B}^N} V(\vec{x})p(\vec{x}).
\label{eq:expval}
\end{equation}
The option value calculation and the pattern-mixture likelihood \eqref{eq:pattern-mixture-likeli} both take this form. The function $V(\vec{x})$ may depend on the entire path $\vec{x}$, and not only on the leaf nodes. Notice that \eqref{eq:expval} is a summation over $2^N$ terms, so that computing by complete enumeration quickly becomes infeasible as $N$ increases. In this work, we present a method to parallelize the calculation in a multiprocessor computing environment. In the option valuation 
problem, the common method to value options is to use an efficient backward induction method without 
considering the $2^N$ terms in \eqref{eq:expval}. The proposed parallelization method is suitable for advanced class of 
path-dependent options that are valued by sampling paths off the recombinant binomial tree than through backward 
induction \cite[Chapter 4.]{Glasserman2003}. 
Our method uses a Single Program Multiple Data (SPMD) approach \cite{Pacheco}, where each of the $M$ processes determines its assigned subset of $\mathbb{B}^N$ without coordination from a central process. Hence, the calculation can be transformed into an embarrassingly parallel problem \cite{Foster1995}, in which processes need not communicate except at the end of the computation, thereby allowing efficient scaling to many processes.
Even with a large number of processes $M$, the number of paths $2^N$ quickly becomes exceedingly large as $N$ increases. Therefore, we consider a Partitioned Monte Carlo method which uses a similar parallelization to reduce approximation error relative to basic Monte Carlo. 

The rest of the paper is organized as follows. Section \ref{sec:binomial} introduces the binomial tree model to value an option using Bernoulli paths. Section \ref{sec:parallel} describes a parallel scheme to compute the expected value exactly. Section \ref{sec:montecarlo} presents the Partitioned Monte Carlo method to approximately compute the expected value. Section \ref{sec:results} presents results from the implementation of the methods for put options in \proglang{R} and \proglang{Julia}. Concluding remarks are given in Section \ref{sec:conclusion}.

\section{Valuation of a path-dependent option using the binomial tree model}
\label{sec:binomial}
An option is a financial contract that gives the owner the right, but not the obligation, to either buy or sell a certain number of shares at a prespecified fixed price on a prespecified future date. A call option gives the owner the right to buy shares, while a put option gives the owner the right to sell shares. Several factors are used to value an option. The strike price $K$ is a prespecified fixed price. The time $T$ is the future date of maturity; for European options which are considered in this paper, the option can only be exercised at time $T$ and subsequently becomes worthless. The value of an option is the amount a buyer is willing to pay when the option is bought. It depends on $K$, $T$, and the characteristics of the underlying stock. More formally, let $V(S_t)$ denote the value of the option at time $t$, at which time the price of the underlying stock is $S_{t}$. We assume that time starts at $t=0$ at which point the option is bought or sold. The objective is to calculate $V(S_0)$, the value of the option at time $t=0$. Although $V(S_t)$ for $t < T$ is not known, the value $V(S_T)$, called the payoff, is known with certainty. The value $V(S_T)$ of a call option at the time of maturity $T$ is given by
\begin{equation}
\label{eq:call}
V(S_T) = \max\{S_T-K,0\}.
\end{equation}
For a put option, the value at the time of maturity $T$ is given by
\begin{equation}
\label{eq:put}
V(S_T) = \max\{K-S_T,0\}.
\end{equation}
Note that in \eqref{eq:call} and \eqref{eq:put}, the payoffs $V(S_T)$ depend only on the price of stock at time $T$, $S_T$, and the 
strike price $K$. In more complicated options, the payoffs often depend on additional factors. For example, the payoffs in  
path-dependent options depend on the historical price of the stock in a certain time period. For now, we will restrict our attention 
to simple options with payoffs in \eqref{eq:call} and \eqref{eq:put}.

The binomial tree method of option valuation is based on simulating an evolution of the future price of the underlying stock between $t=0$ and $T$ using a recombinant binomial tree. We first discretize the interval $[0,T]$ into equidistant time steps. We select $N$ to be the number of time steps, which determines the size of the tree, and let $\delta t = T/N$ be the size of each time step. Denote $t_i = i \, \delta t$ for $i = 0, \ldots, N$ as the distinct time points. Imagine a two-dimensional grid with $t$ on the horizontal axis and stock price $S_t$ on the vertical axis; by discretizing time, we slice the horizontal axis into equidistant time steps. We next discretize $S_{t}$ at each $t=t_i$ resulting in values $S_{t_ij}$, where $j$ is the index on the vertical axis. For notational convenience, we will write
$S_{t_ij}$ as $S_{ij}$. The binomial tree method makes the following assumptions.
\begin{description}
	\descitem{A1} The stock price $S_{t_i}$ at $t_i$ can only take two possible values over time step $\delta t$: price goes up to $S_{t_i}u$ or goes down to $S_{t_i}d$ at $t_{i+1}$ with $0<d<u$ where $u$ is the factor of upward movement and $d$ is the factor of downward movement. To enforce symmetry in the simulated stock prices, we assume $ud=1$.
	
	\descitem{A2} The probability of moving up between time $t_i$ and $t_{i+1}$ is $p$ for $i=0,\ldots,N-1$. 
	
	\descitem{A3} $\E(S_{t_{i+1}} \mid S_{t_i}) = S_{t_i}e^{q\delta t}$, where $q$ is the annual risk-free interest rate. For example, $q$ may be the interest rate from a savings account at a high credit-worthy bank.
\end{description}
Under assumptions A\ref{A1}--A\ref{A3}, and if the stock price movements are assumed to be lognormally distributed with 
variance $\sigma^2$, it can be shown that
\begin{align*}
&u = \beta + \sqrt{\beta^2+1}, \\
&p = (e^{q\delta t}-d)/(u-d), \\
&\beta = \frac{1}{2} (e^{-q\delta t} + e^{(q+\sigma^2)\delta t}).
\end{align*}
The standard deviation $\sigma$ is also known as the volatility of the stock. For more details on deriving $u$ and $p$, see Hull \cite{Hull} or Seydel \cite{Seydel}. The above description follows the notations and development in Section 1.4 of Seydel \cite{Seydel} closely.

Starting with the current stock price in the market $S_0$, a grid of possible future stock prices $S_{ij}$ is built using $u$ and $p$. Algorithm \ref{alg:build-grid} shows the procedure to build a binomial tree of simulated future stock prices and calculate the payoffs at time $T$ for a call option, for which, $V(S_T)$ is given by \eqref{eq:call} at each $j$ at time $T$. Therefore, $V_{Nj} = \max\{S_{Nj}-K,0\}, j = 0,\ldots,N$, where $V_{ij}$ is $V(S_{ij})$. Figure \ref{fig:2bin} shows a two-step recombinant binomial tree for a call option starting at the stock price $S$ with the stock price evolution and option payoffs.

\begin{algorithm}
	\begin{algorithmic}
		\For{$i = 1,2,\ldots,N$}
		
		$S_{ij} = S_0u^jd^{i-j}$ for $j=0,1,\ldots,i$
		
		\EndFor
		\For{$j = 0, \ldots, N$}
		\State $V_{Nj} \gets \max\{S_{Nj}-K,0\}$
		\EndFor
	\end{algorithmic}
	\caption{Build the grid of stock prices and calculate option 
		payoffs for binomial method.}
	\label{alg:build-grid}
\end{algorithm}

In order to calculate the option value $V(S_0)$, the probabilities of reaching each of the leaf nodes of the tree must be calculated. These may be obtained from the probabilities of traversing each of the Bernoulli paths of dimension $N$. Since we assume that $p$ is constant from A\ref{A2}, all the paths with the same number of up and down movements have the same probability of being traversed. The option value $V(S_0)$ is computed as the expectation of the payoffs discounted to the starting time $t=0$ at the annual interest rate $q$ as
\begin{equation}
V(S_0) = e^{-qT}\sum_{i=0}^{N}p(i)V_{Ni} = e^{-qT}\sum_{i=0}^N\binom{N}{i}p^{i}{(1-p)}^{N-i} V_{Ni},
\label{eq:f8}
\end{equation}
where $p(i)=\binom{N}{i}p^{i}{(1-p)}^{N-i}$ is the probability of traversing paths ending at leaf node $i$, whose payoff 
is $V_{Ni}$.

\begin{figure}\centering
	\includegraphics[width=0.65\textwidth]{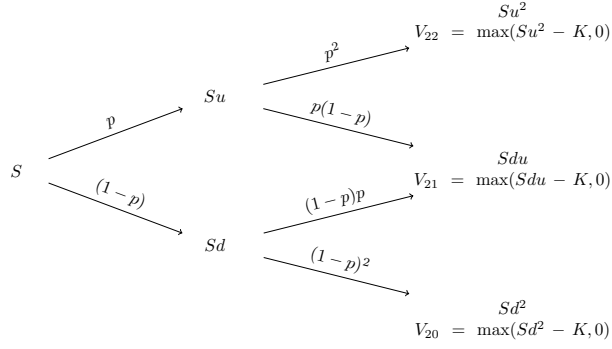}
	\caption{A two-step recombinant binomial tree with call option payoffs.}
	\label{fig:2bin}
\end{figure}

Let $\vec{X} = (X_1,\ldots,X_N)$ represent a Bernoulli path where each $X_i \sim \text{Bernoulli}(p)$ independently for $i = 1, \ldots, N$. Figure \ref{fig:2xbin} shows the two-step binomial tree in Figure \ref{fig:2bin} with Bernoulli paths to leaf nodes shown as vectors. The probability of taking path $\vec{x}$ is given by
\begin{equation*}
\Prob(\vec{X} = \vec{x}) = p^{\vec{x}'\vec{1}} (1-p)^{N-\vec{x}'\vec{1}},
\end{equation*}
where $\vec{1}$ is an $N$-dimensional vector of ones. 
Since there are ${{N}\choose{i}}$ ways of reaching the leaf node $i$, 
\begin{align}
\Prob\{ \text{reaching terminal node i} \}
&= \binom{N}{i} p^{i} (1-p)^{N-i} \nonumber \\
&= \sum_{\substack{\vec{x}\in \mathbb{B}^{N}: \vec{x}'\vec{1} = i}} p^{\vec{x}'\vec{1}} (1-p)^{N-\vec{x}'\vec{1}}.
\label{eq:f11}
\end{align}
Substituting \eqref{eq:f11} in \eqref{eq:f8}, we obtain
\begin{equation}
V(S_0) = e^{-qT} \sum_{i=0}^{N} V_{Ni}
\sum_{\substack{\vec{x}\in \mathbb{B}^{N}: \vec{x}'\vec{1} = i}}
p^{\vec{x}'\vec{1}} (1-p)^{N-\vec{x}'\vec{1}}.
\label{eq:f12}
\end{equation}
If the magnitudes and probabilities of up and down movements at each time step are constant, there is little computational advantage in evaluating the option value using \eqref{eq:f12} as opposed to \eqref{eq:f8}. However, if the tree is built using time-varying up and down movements with corresponding probability $p_t$ of an up movement at time $t$, or if the payoffs depend on the path $\vec{x}$, the model in \eqref{eq:f8} cannot be used. Let $p(\vec{x})$ be the probability of traversing the Bernoulli path $\vec{x}$ and $V_N(\vec{x})$ be the corresponding payoff. Since the space of Bernoulli paths is $\mathbb{B}^N$, \eqref{eq:f12} becomes
\begin{equation}\label{eq:f13}
V(S_0) = e^{-qT} \sum_{i=0}^{N}\sum_{\substack{\vec{x}\in \mathbb{B}^{N}: \vec{x}'\vec{1} = i}}p(\vec{x})V_N(\vec{x}) = e^{-qT}\sum_{\substack{\vec{x}\in \mathbb{B}^{N}}}V_N(\vec{x})p(\vec{x}),
\end{equation}
where $p(\vec{x}) = \prod_{i=1}^{N} p_{i}^{I(x_i = 1)} (1 - p_i)^{I(x_i = 0)}$ and $I$ is the indicator function. Note that \eqref{eq:f13} is similar to \eqref{eq:expval}. We seek to parallelize the computation of the option value $V(S_0)$ in \eqref{eq:f13} or in general, the expected value in \eqref{eq:expval}.

\begin{figure}\centering
	\includegraphics[width=0.65\textwidth]{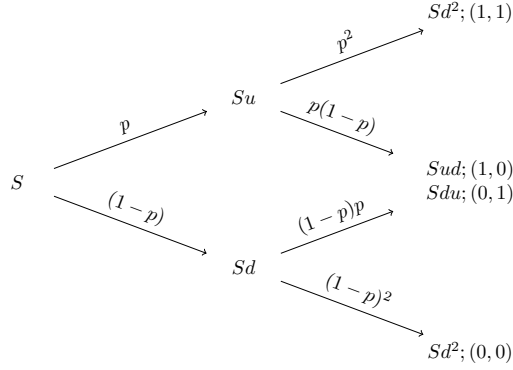}
	\caption{A two-step binomial tree with Bernoulli paths.}
	\label{fig:2xbin}
\end{figure}

\section{Parallel Bernoulli Path Algorithm}
\label{sec:parallel}
Computation of the expected value \eqref{eq:f13} quickly becomes expensive as $N$ increases, as $2^N$ Bernoulli paths must be considered. For example, taking $N = 24$ yields $16,\!777,\!216$ possible paths. The computational burden can be efficiently shared by multiple processors by noting that the problem is embarrassingly parallel. Works such as Ganesan et al. \cite{Ganesan} and Kolb and Pharr \cite{Nvidia} have proposed parallel methods for evaluating option pricing models based on backward induction in a binary tree. To the best of our knowledge, the approach to parallelize the expected value computation using Bernoulli paths has not been considered before. In our previous work, Popuri et al. \cite{Popuri2013} used a master-worker paradigm where the master process builds the tree, calculates the payoffs, allocates the terminal nodes to the worker processes, and collects the calculated values from each worker process to construct the final result. Even though the processes do not communicate with each other during the calculation, there is substantial initial communication between the master and the worker processes.

Our approach here is based on the SPMD paradigm where a single program is executed on all the processes in parallel. 
It is not necessary for one process to preallocate the workload to individual processes; instead, each process can determine its share of the $2^N$ paths to work on. This is possible using the unique rank assigned to each process. 
Each process computes a local expected value on its partition of the sample space, and the final expected value is computed by summing across all processes. This summation is accomplished in the Message Passing Interface (MPI) framework through a \emph{reduce} operation that coordinates communication between processes in an efficient way \cite{Pacheco}. 
Utilizing the unique ranks of the processes to parallelize an algorithm 
is a common theme in parallel computing \cite{Pacheco}. For example, Swarztrauber and Sweet \cite{Swarztrauber1989} use 
the binary representation of the data to map computations to processes in a parallel direct solution of Poisson's equation. 
Here, we use the leading bits of the Bernoulli paths to map the paths onto processes.

Suppose there are $M$ parallel processes with ranks $m = 0, \ldots, M-1$; note that ranks traditionally start at 0 in the MPI framework. The process with rank $m$ will be referred to as ``process $m$''. We assume that $M \leq N$ and that $M$ is a power of $2$. Let $r = \log_2(M)$ so that the rank $m$ of a process can be written with the $r$-digit binary representation $m = z_{r-1} 2^{r-1} + \cdots + z_1 2^1 + z_0 2^0$, where each $z_j \in \mathbb{B}$. Process $m$ is assigned all paths $\vec{x}$ with prefix $(z_{r-1}, \ldots, z_1, z_0)$; this set of $2^{N-r}$ paths is denoted 
\begin{align*}
\mathbb{B}_m^{N} = \{ \vec{x} \in \mathbb{B}^N : x_1 = z_{r-1}, \ldots, x_{r-1} = z_1, x_r = z_0 \}.
\end{align*}
Note that the sets $\mathbb{B}_m^N$ form a partition of $\mathbb{B}^N$. Figure \ref{fig:algo} shows a diagram of the mapping from $m$ to $\mathbb{B}_m^{N}$. For each Bernoulli path $\vec{x} \in \mathbb{B}_m^{N}$, process $m$ computes the probability of traversing the path $p(\vec{x})$ and the payoff value $V_N(\vec{x})$. The local expected value of the payoffs $V_m$ on process $m$ is calculated as
\begin{equation}
\label{eq:f14}
V_m = e^{-qT}\sum_{\substack{\vec{x} \in \mathbb{B}_m^{N}}} p(\vec{x}) V_N(\vec{x}).
\end{equation}
Finally, local expected values are summed to produce the final result
\begin{equation}
\label{eq:f15}
V(S_0) = \sum_{m=0}^{M-1}V_m;
\end{equation}
this is implemented by an MPI reduce operation to obtain the result on process $0$. The computation of \eqref{eq:f15} requires $2^{N-r}$ steps on $2^r$ parallel processes rather than $2^N$ steps on a single process, as required in the serial computation. The method can be extended to the case when $M$ is not exactly a power of 2 if we are willing to forfeit perfect load balancing. For example, we can consider a partition $\mathbb{B}^N = \mathbb{B}_0^N \cup \cdots \cup \mathbb{B}_K^N$ for some $K >> M$. Process $0$ can handle $\mathbb{B}_m^N$ for $m = 0, M, 2M, \ldots$, process $1$ can handle $\mathbb{B}_m^N$ for $m = 1, M+1, 2M+1, \ldots$, and so forth. Note that the idea of partitioning the paths is applicable to trees with more than two branches at each node. Examples of such trees include trinomial trees\cite{Boyle1986}.

\begin{figure} \centering
	\includegraphics[width=0.65\textwidth]{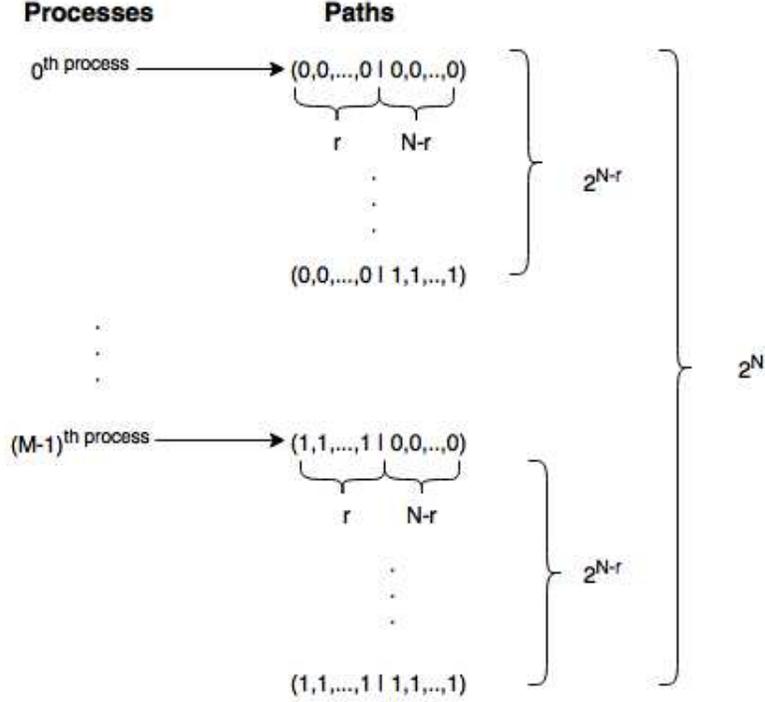}
	\caption{Process-Bernoulli path mapping}
	\label{fig:algo}
\end{figure}


\section{Monte Carlo Estimation and Variance Reduction}
\label{sec:montecarlo}
Recall that the number of Bernoulli paths in the set $\mathbb{B}^N$ grows exponentially with $N$. When $N$ becomes large, it is infeasible to compute the expected value \eqref{eq:f13} exactly, even with a reasonably large number of processors. Monte Carlo (MC) estimation provides a way to approximate a complicated expected value without enumerating the entire sample space. In this section, we discuss an MC method that uses the partitioning scheme from Section \ref{sec:parallel} to approximate the result using $M$ parallel processes. The $m^{\text{th}}$ process is given the responsibility of drawing from $\mathbb{B}_m^N$, for $m = 0, \ldots, M-1$, so that we effectively enumerate the first $r = \log_2 M$ steps of each path, and draw the rest through Monte Carlo. This provides a reduction in variance over a basic MC estimator that uses the same number of draws. 

Define
\begin{equation*}
\theta = \E[V(\vec{X})] = \sum_{\vec{x} \in \mathbb{B}^N} V(\vec{x})p(\vec{x}),
\end{equation*}
where the suffix $N$ in $V(\vec{x})$ is dropped for notational convenience. The option value in
\eqref{eq:f13} can be written as $V = e^{-qT}\theta$. Given an estimate $\hat{\theta}$ of $\theta$, an estimate of $V$ is $\hat{V} = e^{-qT} \hat{\theta}$, its variance is $\Var(\hat{V}) = e^{-2qT} \Var(\hat{\theta})$, and an estimate of the variance is $\widehat{\Var}(\hat{V}) = e^{-2qT} \widehat{\Var}(\hat{\theta})$. Therefore, we will focus on estimating $\theta$ for the remainder of this section.

Let $\vec{x}_1,\ldots,\vec{x}_R$ be $R$ independent and identically distributed (i.i.d.) Bernoulli paths sampled from $\mathbb{B}^{N}$. Then the MC estimator of $\theta$ is given by
\begin{equation*}
\hat{\theta} = \frac{1}{R}\sum_{i=1}^{R}V(\vec{x}_i)
\end{equation*}
and its variance is
\begin{equation}
\label{eq:mc2}
\Var(\hat{\theta}) = \frac{1}{R}\Var\lbrack V(\vec{X})\rbrack,
\end{equation}
which can be estimated from the MC draws by
\begin{equation*}
\widehat{\Var}(\hat{\theta}) = \frac{1}{R^2} \sum_{i=1}^R \Big( V(\vec{x}_i) - \hat{\theta} \Big)^2.
\end{equation*}
In Section \ref{sec:parallel}, we partitioned the space $\mathbb{B}^N$ of Bernoulli paths into $\mathbb{B}_0^N, \ldots, \mathbb{B}_M^N$. Let $\mathcal{D}_m$ denote the event $\lbrack \vec{X} \in \mathbb{B}_m^N\rbrack$ which occurs with probability $\Prob(\mathcal{D}_m)$, for $m=0, \ldots, M-1$. Furthermore, consider the partitioning $\vec{X} = (\vec{Z}, \vec{Y})$ where $\vec{Z}\in \mathbb{B}^r$ and $\vec{Y}\in \mathbb{B}^{N-r}$. 
We can now write $\theta$ as
\begin{equation}
\label{eq:mc9}
\theta = \sum_{m=0}^{M-1}\E\lbrack V(\vec{X})\mid\mathcal{D}_m\rbrack \Prob(\mathcal{D}_m),
\end{equation}
where
\begin{align*}
\Prob(\mathcal{D}_m)
= \sum_{\vec{y}\in \mathbb{B}^{N-r}}\Prob(\vec{Z}=\vec{z}_m,\vec{Y}=\vec{y})
= \Prob(\vec{Z}=\vec{z}_m)
\end{align*}
and $\vec{z}_m$ is the binary representation of $m$ corresponding to the rank of the $m^{\text{th}}$ process. Let $\theta^{(m)} = \E\lbrack V(\vec{X})\mid\mathcal{D}_m\rbrack$ and let $\vec{x}_1^{(m)}, \ldots, \vec{x}_{R_m}^{(m)}$ be an i.i.d.~sample from the distribution of paths on $\mathbb{B}_{m}^{N}$ for each $m=0, \ldots, M-1$. Suppose $\sum_{m=0}^{M-1} R_m = R$ so that the sample size used is as in the basic MC estimator. The estimator
\begin{equation*}
\hat{\theta}_s^{(m)} = \frac{1}{R_m}\sum_{i=1}^{R_m} V(\vec{x}_i^{(m)})
\end{equation*}
is an unbiased estimator of $\theta^{(m)}$ with variance $\frac{1}{R_m}\Var\lbrack V(\vec{X})\mid\mathcal{D}_m\rbrack$. Substituting $\hat{\theta}_s^{(m)}$ for $\theta^{(m)}$ in \eqref{eq:mc9} yields the Partitioned MC estimator
\begin{equation}
\label{eq:mc12}
\hat{\theta}_s = \sum_{m=0}^{M-1}\hat{\theta}_s^{(m)} \Prob(\mathcal{D}_m).
\end{equation}
Following Rubinstein and Kroese \cite{Rubinstein}, we choose sample sizes $R_m$ proportional to $\Prob(\mathcal{D}_m)$ as $R_m = R \cdot \Prob(\mathcal{D}_m)$ for each $m$. With this choice, and ignoring that $R \cdot \Prob(\mathcal{D}_m)$ likely will not be an exact integer, the variance of the Partitioned MC estimator can be written
\begin{align}
\Var(\hat{\theta}_s) &= \sum_{m=0}^{M-1}\Var(\hat{\theta}^{(m)})\lbrack \Prob(\mathcal{D}_m)\rbrack^2 \nonumber \\
&= \frac{1}{R}\sum_{m=0}^{M-1}\Var\lbrack V(\vec{X})\mid\mathcal{D}_m\rbrack \Prob(\mathcal{D}_m).
\label{eq:mc13}
\end{align}
A corresponding variance estimator is
\begin{align*}
\widehat{\Var}(\hat{\theta}_s)
&= \frac{1}{R}\sum_{m=0}^{M-1}
\left[ \frac{1}{R_m} \sum_{i=1}^{R_m} \Big( V(\vec{x}_i^{(m)}) - \hat{\theta}_m \Big)^2 \right]
\Prob(\mathcal{D}_m) \nonumber \\
&= \frac{1}{R^2}\sum_{m=0}^{M-1}
\sum_{i=1}^{R_m} \Big( V(\vec{x}_i^{(m)}) - \hat{\theta}_m \Big)^2.
\end{align*}
To verify that $\hat{\theta}_s$ gives a variance reduction over $\hat{\theta}$, the law of total variation gives
\begin{align}
\Var\lbrack V(\vec{X})\rbrack &= \E_{\mathcal{D}}\Var\lbrack V(\vec{X})\mid\mathcal{D}\rbrack + \Var_{\mathcal{D}}\E\lbrack V(\vec{X})\mid\mathcal{D}\rbrack \\
&= \sum_{m=0}^{M-1}\Var\lbrack V(\vec{X})\mid\mathcal{D}_m\rbrack \Prob(\mathcal{D}_m) + \Var_{\mathcal{D}}\E\lbrack V(\vec{X})\mid\mathcal{D}\rbrack \nonumber\\
&= R\Var(\hat{\theta}_s)+\Var_{\mathcal{D}}\E\lbrack V(\vec{X})\mid\mathcal{D}\rbrack,
\label{eq:mc14}
\end{align}
where the last equality is from \eqref{eq:mc13}. Substituting the left hand side in \eqref{eq:mc14} in terms of $\Var(\hat{\theta})$ from \eqref{eq:mc2} and dividing both sides by $R$ we get
\begin{equation}
\label{eq:mc15}
\Var(\hat{\theta}) = \Var(\hat{\theta}_s) + \frac{\Var_{\mathcal{D}}\E\lbrack V(\vec{X})\mid\mathcal{D}\rbrack}{R}.
\end{equation}
Note that $\Var_{\mathcal{D}}\E\lbrack V(\vec{X})\mid\mathcal{D}\rbrack = 0$ if $V(\vec{X})\mid\mathcal{D}$ does not depend on the first $r$ steps of the Bernoulli paths or when $r \in \{ 0, N \}$, that is, when the number of processes $M \in \{ 1, 2^N \}$. When $M=1$, the Partitioned MC method is same as the basic MC method and when $M=N$, it is same as the exact expected value in \eqref{eq:f13}. Since the payoff $V(\vec{X})$ is assumed to depend on the entire path $\vec{X}$, the second term in the right hand side of \eqref{eq:mc15} is greater than $0$ when $0<r<N$ and, therefore, the Partitioned MC estimator in \eqref{eq:mc12} typically yields strict reduction in the variance. The reduction will be more pronounced when $V(\vec{X})$ are heterogeneous across the $\mathcal{D}_m$ and homogeneous within each $\mathcal{D}_m$.

\begin{remark}
	An interesting variation of the Partitioned MC method is to reuse the same sample of $R$ draws from $\mathbb{B}^{N-r}$ on all processes. Consider again the partitioning $\vec{X} = (\vec{Z}, \vec{Y})$, and suppose $\vec{Z}$ and $\vec{Y}$ are independent. Let $\vec{x}_i^{(m)} = (\vec{z}_{m},\vec{y}_{i})$, where $\vec{y}_1, \ldots, \vec{y}_R$ are i.i.d.~draws from the distribution of $\vec{Y}$ and $\vec{z}^{(m)} = (z_{r-1}^{(m)}, \ldots, z_1^{(m)}, z_0^{(m)})$ is the binary representation of $m$ corresponding to the rank of the $m^{\text{th}}$ process. Then an estimate for $\theta^{(m)}$ is given by
	\begin{align}
	\tilde{\theta}^{(m)} = \frac{1}{R}\sum_{i=1}^{R}V(\vec{x}_i^{(m)} )
	\end{align}
	and $\theta$ in \eqref{eq:mc9} can be estimated unbiasedly by 
	\begin{align}
	\tilde{\theta} &= \sum_{m=0}^{M-1} \tilde{\theta}^{(m)} \Prob(\mathcal{D}_m) \nonumber \\
	&= \frac{1}{R} \sum_{i=1}^{R} \sum_{m=0}^{M-1} V(\vec{x}_i^{(m)})\Prob(\mathcal{D}_m) \nonumber \\
	&= \frac{1}{R} \sum_{i=1}^{R} \E_{\vec{X}\mid\vec{Y}} \lbrack V(\vec{Z},\vec{y}_i)\rbrack.
	\label{eq:mc16}
	\end{align}
	The variance of this estimator is
	$\Var(\tilde{\theta}) = \frac{1}{R} \Var_{\vec{Y}} \E_{\vec{X} \mid \vec{Y}} \lbrack V(\vec{X})\rbrack$ and an estimate of the variance from the MC sample is
	\begin{align*}
	\widehat{\Var}(\tilde{\theta})
	&= \frac{1}{R^2} \sum_{i=1}^{R} \left(\E_{\vec{X}\mid\vec{Y}} \lbrack V(\vec{Z},\vec{y}_i)] - \tilde{\theta}^{(m)} \right)^2 \\
	&= \frac{1}{R^2} \sum_{i=1}^{R} \left( \sum_{m=0}^{M-1} V(\vec{x}_i^{(m)})\Prob(\mathcal{D}_m) - \tilde{\theta}^{(m)} \right)^2.
	\end{align*}
	We refer to $\tilde{\theta}$ as the Shared Sample MC estimator. Now, the variance of $V(\vec{X})$ can be written as 
	\begin{align}
	\Var\lbrack V(\vec{X})\rbrack
	&= \Var_{\vec{Y}}\E\lbrack V(\vec{X})\mid\vec{Y}\rbrack + \E_{\vec{Y}}\Var\lbrack V(\vec{X})\mid\vec{Y}\rbrack \nonumber \\
	&= R\Var(\tilde{\theta})+\E_{\vec{Y}}\Var\lbrack V(\vec{X})\mid\vec{Y}\rbrack.
	\label{eq:mc17}
	\end{align}
	Substituting the left hand side in \eqref{eq:mc17} in terms of $\Var(\hat{\theta})$ from \eqref{eq:mc2} and dividing both sides by $R$ we get
	\begin{equation}
	\label{eq:mc18}
	\Var(\hat{\theta}) = \Var(\tilde{\theta}) + \frac{\E_{\vec{Y}}\Var\lbrack V(\vec{X})\mid\vec{Y}\rbrack}{R}.
	\end{equation}
	Again, the second term in the right hand side of the \eqref{eq:mc18} is zero if and only if $V(\vec{X})\mid \vec{Y}$ does not depend on the first $r$ steps of the Bernoulli paths, for $0 < r < N$. Since we assume $V(\vec{X})$ depends on the entire path $\vec{X}$, $\Var(\tilde{\theta})$ is strictly less than $\Var(\hat{\theta})$. The following result summarizes the relationship among the variances of $\hat{\theta}$, $\hat{\theta}_s$, and $\tilde{\theta}$.
	
	\begin{theorem}
		\label{result:strathybrid}
		Suppose $R_m = R$ for $m=0, \ldots, M-1$, $0<r<N$, and $V(\vec{Z}_k,\vec{Y})$ and $V(\vec{Z}_l,\vec{Y})$ are positively correlated for all 
		$k,l \in \{ 0, \ldots, M-1 \}$. Then $\Var(\hat{\theta}_s) \leq \Var(\tilde{\theta}) \leq \Var(\hat{\theta})$.
	\end{theorem}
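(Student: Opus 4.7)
The plan is to derive explicit formulas for the three variances under the hypothesis $R_m = R$ and then establish the two inequalities separately using standard variance-decomposition tools: the law of total variance for $\Var(\tilde{\theta}) \leq \Var(\hat{\theta})$, and a variance-of-a-sum expansion together with the positive-correlation hypothesis for $\Var(\hat{\theta}_s) \leq \Var(\tilde{\theta})$. Throughout I will exploit the independence of $\vec{Z}$ and $\vec{Y}$ assumed in the Remark, which yields both $\Var[V(\vec{X}) \mid \mathcal{D}_m] = \Var_{\vec{Y}}[V(\vec{z}_m, \vec{Y})]$ and the linear-in-$m$ representation $\E[V(\vec{X}) \mid \vec{Y}] = \sum_{m=0}^{M-1} V(\vec{z}_m, \vec{Y})\,\Prob(\mathcal{D}_m)$.

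The second inequality is essentially already in hand. Equation \eqref{eq:mc18} decomposes $\Var(\hat{\theta})$ as $\Var(\tilde{\theta}) + R^{-1} \E_{\vec{Y}}\Var[V(\vec{X}) \mid \vec{Y}]$; since $V(\vec{X})$ depends on the entire path and, in particular, on $\vec{Z}$, the residual conditional-variance term is non-negative for $0 < r < N$, so $\Var(\tilde{\theta}) \leq \Var(\hat{\theta})$ follows immediately.

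For the first inequality I would expand the variance of the conditional expectation using the linearity above:
\begin{align*}
R \cdot \Var(\tilde{\theta})
&= \Var_{\vec{Y}}\Big[\sum_{m=0}^{M-1} V(\vec{z}_m, \vec{Y})\,\Prob(\mathcal{D}_m)\Big] \\
&= \sum_{m=0}^{M-1} [\Prob(\mathcal{D}_m)]^2 \Var_{\vec{Y}}[V(\vec{z}_m, \vec{Y})]
+ \sum_{k \neq l} \Prob(\mathcal{D}_k)\,\Prob(\mathcal{D}_l)\,\Cov_{\vec{Y}}[V(\vec{z}_k, \vec{Y}), V(\vec{z}_l, \vec{Y})].
\end{align*}
The diagonal sum reproduces $R \cdot \Var(\hat{\theta}_s)$ exactly, after using $\vec{Z} \perp \vec{Y}$ to rewrite each $\Var[V(\vec{X}) \mid \mathcal{D}_m]$ as $\Var_{\vec{Y}}[V(\vec{z}_m, \vec{Y})]$. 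Each off-diagonal covariance is a positive multiple of the corresponding correlation, which is non-negative by the positive-correlation hypothesis, so the off-diagonal sum is non-negative and $\Var(\hat{\theta}_s) \leq \Var(\tilde{\theta})$ follows after dividing by $R$.

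The main obstacle I foresee is bookkeeping rather than anything deep: one must consistently distinguish conditioning on $\mathcal{D}_m$ (which fixes the prefix $\vec{Z} = \vec{z}_m$) from conditioning on $\vec{Y}$ (which fixes the tail of the path), and use $\vec{Z} \perp \vec{Y}$ to translate between them. Once that bridge is built, both inequalities reduce to routine variance decompositions. The positive-correlation hypothesis enters at exactly one point, controlling the sign of the off-diagonal covariance terms in the expansion of $\Var(\tilde{\theta})$; without it these terms are sign-indefinite and $\Var(\tilde{\theta})$ could in principle fall below $\Var(\hat{\theta}_s)$, so this is the structural ingredient that makes the ordering in the theorem work.
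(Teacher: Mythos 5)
Your proof is correct and follows essentially the same route as the paper: the second inequality is read off from the decomposition in \eqref{eq:mc18}, and the first comes from expanding $\Var_{\vec{Y}}\bigl[\sum_m V(\vec{z}_m,\vec{Y})\Prob(\mathcal{D}_m)\bigr]$ into a diagonal part equal to $R\,\Var(\hat{\theta}_s)$ plus off-diagonal covariances that are non-negative by the positive-correlation hypothesis. Your explicit remark that independence of $\vec{Z}$ and $\vec{Y}$ is what identifies $\Var[V(\vec{X})\mid\mathcal{D}_m]$ with $\Var_{\vec{Y}}[V(\vec{z}_m,\vec{Y})]$ is a bookkeeping point the paper leaves implicit, but the argument is the same.
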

	
	\begin{proof}
		We have already shown that $\Var(\tilde{\theta}) \leq \Var(\hat{\theta})$. Now we have $\Var(\hat{\theta}_s) = \frac{1}{R}\sum_{m=0}^{M-1}\Var\lbrack V(\vec{X})\mid\mathcal{D}_m\rbrack \lbrack \Prob(\mathcal{D}_m)\rbrack^2$, and from \eqref{eq:mc16},
		\begin{align*}
		\Var(\tilde{\theta}) &= \frac{1}{R}\Var_{\vec{Y}}\left[ \sum_{m=0}^{M-1}V(\vec{Z}_m,\vec{Y})\Prob(\mathcal{D}_m) \right] \nonumber \\
		&= \frac{1}{R} \sum_{m=0}^{M-1}\Var_{\vec{Y}}\lbrack V(\vec{Z}_m,\vec{Y})\rbrack \lbrack \Prob(\mathcal{D}_m) \rbrack^2 + 
		\frac{1}{R} \mathop{\sum \sum}_{k \neq l} \Cov\Big( V(\vec{Z}_k,\vec{Y}), V(\vec{Z}_l,\vec{Y}) \Big) \Prob(\mathcal{D}_k)\Prob(\mathcal{D}_l) \label{thm:eq1}\\
		&= \Var(\hat{\theta}_s)+\frac{1}{R} \mathop{\sum \sum}_{k \neq l} \Cov\Big( V(\vec{Z}_k,\vec{Y}), V(\vec{Z}_l,\vec{Y}) \Big)
		\Prob(\mathcal{D}_k)\Prob(\mathcal{D}_l) \nonumber \\
		&\geq \Var(\hat{\theta}_s).
		\end{align*}
	\end{proof}
\end{remark}

\section{Application to option pricing}
\label{sec:results}
We implemented the method described in Section \ref{sec:parallel} to value a put option using the \proglang{R} 3.2.2 and \proglang{Julia} 0.4.6 programming environments. Computations were run on a distributed cluster with compute nodes, each having two Intel E5-2650v2 Ivy Bridge (2.6 GHz, 20 MB cache) processors with 8 cores per node, for a total of 16 cores per node. All nodes have 64 GB of main memory and are connected by a quad-data rate InfiniBand interconnect. Open MPI 1.8.5 \cite{GabrielEtAl2004} was used as the underlying implementation of the MPI framework.

\proglang{R} is a statistical computing environment that facilitates advanced data analysis, provides graphical capabilities and an interpreted high-level programming language \cite{Rwebsite}. On top of the statistical, computational, and programmatic features available in the core \proglang{R} environment, additional capabilities are available through numerous packages which have been contributed by the user community. The \pkg{Rmpi} \cite{Yu2002} and \pkg{pbdMPI} \cite{pbdR2012} packages may be used to write MPI programs from \proglang{R}. Results shown in this section are based on \pkg{Rmpi}, but \pkg{pbdMPI} performed similarly in our experience. The package \pkg{Rcpp} \cite{Rcpp} facilitates integration of \proglang{C++} code into \proglang{R} programs, which can substantially improve performance at the cost of an increased programming burden. We have not yet explored \pkg{Rcpp} in our implementation, but note its potential use.

\proglang{Julia} is a recently developed programming language that is gaining popularity in scientific computing, data analysis, and high performance computing \cite{Julialang}. It is a compiled language that uses the Low Level Virtual Machine Just-in-Time technology \cite{LLVM} to generate an optimized version of the source code compiled to the machine level. \proglang{Julia} provides a number of computational and statistical capabilities, both in the core environment and through packages contributed by the user community. We have used the package \pkg{MPI} \cite{MPIJulia} to run MPI programs in \proglang{Julia}. Integration with \proglang{C++} is also possible in \proglang{Julia} through packages such as \pkg{CxxWrap} and \pkg{Cpp}, but we have not yet explored their use. Our implementation uses native \proglang{Julia} code with the \pkg{MPI} package. Because \proglang{Julia} is compiled into machine-level code, it is expected that a program written in \proglang{Julia} will perform better than an equivalent program written in \proglang{R}. Performance results later in this section confirm our hunch.

Listing \ref{lst:jalgo} shows a snippet of our \proglang{Julia} implementation of the parallelization method. Since the structure of our \proglang{R} and \proglang{Julia} implementations are similar, we do not show a similar listing of our \proglang{R} code. In line 1 we load the \pkg{MPI} package. Since our implementation follows the SPMD paradigm, the same code runs on all the processes. The rank of the process on which the code is being run is requested on line 5 and on line 6 the total number of processes in the MPI communicator is requested. As the while loop at line 11 shows, each process works on $2^{N-r}$ out of the total $2^N$ Bernoulli paths. Note the construction of the full Bernoulli path in line 13 by prepending the binary representation of the rank of the specific process on which the code is being run to the current $(N-r)$-dimensional Bernoulli path. The function call to \code{calc\_path\_prob} on line 14 calculates the probability of traversing the Bernoulli path constructed in line 13. The function call to \code{calc\_payoff} on line 15 calculates the option payoff; their function definitions are not shown because they are independent of the parallelization method. Finally, on line 21, expected values from individual processes are summed together to obtain the final answer at process 0. 

\begin{lstfloat}
\centering
\begin{lstlisting}
import MPI
.
MPI.Init()
comm = MPI.COMM_WORLD
id = MPI.Comm_rank(comm)
M = MPI.Comm_size(comm)
.
r = log2(M)
l_n = convert(Int64, 2^(N-r))
.
while i < l_n
    node = i
    path = cat(2, integer_base_b(id, 2, r), integer_base_b(node, 2, N-r))
    p_vt = calc_path_prob(path, probs)
    vt = calc_payoff(S, K, u, d, opt_type, path)
    v += p_vt*vt
    i += 1
end
.
v = exp(-q*T)*v
reduced_v = MPI.Reduce(v, MPI.SUM, 0, comm)
\end{lstlisting}
\caption{A \proglang{Julia} implementation of the parallel Bernoulli path algorithm.}
\label{lst:jalgo}
\end{lstfloat} 

We take a put option as an example to illustrate our methodology. We set a strike price of $K=10$. Current price and volatility of the asset are $S=5$ and $\sigma=0.30$, respectively. Risk-free interest rate is $q=6\%$ and time to maturity $T$ is one year. Tables \ref{tab:r_perf}(a) and \ref{tab:julia_perf}(a) show the wall clock runtimes of our \proglang{R} and \proglang{Julia} implementations, respectively, for problem sizes $N=16, 20, 24, 28$, and $32$. While both the implementations scale well with the number of processes $M$, the \proglang{Julia} implementation is roughly $10$ times faster than \proglang{R}. Our \proglang{R} program for $N=32$ on a single process ($M=1$) resulted in an overflow in the loop that computes the expected value since $2^{31}-1$ is the maximum integer value that can be stored in \proglang{R}. As a result, the runtime for this particular case is recorded as N/A in Table \ref{tab:r_perf}(a). If $T_M$ is the runtime taken for $M$ number of processes, the speedup $S_M$ and efficiency $E_M$ for $M$ are defined as $T_1 / T_M$ and $S_M / M$ respectively. If the program scales up perfectly to $M$ processes, ideal values $S_M=M$ and $E_M=1$ are obtained. These numbers indicate the scalability of the program. Since our \proglang{R} program did not run on a single process for $N=32$, we take the speedup for this case to be $2 \cdot T_2 / T_M$, $M = 2,\ldots,64$ and for $M=1$ and $M=2$, the speedups are taken to be $1$ and $2$ respectively. Tables \ref{tab:r_perf}(b) and \ref{tab:julia_perf}(b) show the speedups and Tables \ref{tab:r_perf}(c) and \ref{tab:julia_perf}(c) show the efficiency numbers of our \proglang{R} and \proglang{Julia} implementations, respectively. The plots in Figure \ref{fig:speedupR} visualize the speedup and efficiency numbers in Tables \ref{tab:r_perf}(b) and \ref{tab:r_perf}(c), respectively, and Figure \ref{fig:speedupJulia} shows the corresponding plots from Table \ref{tab:julia_perf}. These plots visually confirm our conjecture that \proglang{Julia} is more efficient than \proglang{R} for our problem. Note that for a fixed problem size, there is a reduced advantage in the speedup beyond a certain number of tasks. This is because the overhead of coordinating the tasks begins to dominate the time spent doing useful calculations; see Pacheco \cite{Pacheco} for more details. This can be seen for $N=16$ in Table \ref{tab:r_perf}(a) and for $N=16$ and $N=20$ in Table \ref{tab:julia_perf}(a). In both the cases, since the very small total runtime is mostly dominated by the near constant time consumed by the 
MPI reduce operation, the speedup and efficiency numbers are significantly lower than larger sized problems. 
The relatively high speedup and efficiency numbers for $N=32$ in Tables \ref{tab:r_perf}(b) and \ref{tab:r_perf}(c) 
resulted from fixing the speedup for $M=2$ to be $2$ and calculating the rest of the speedup and efficiency 
numbers relative to $M=2$. Since the runtimes for \proglang{Julia} are roughly $10$ times faster than those 
for \proglang{R}, if we estimate the runtime for $N=32$ and $M=1$ for \proglang{R} and accordingly calculate the speedup 
and efficiency numbers, we will notice that the numbers drop and are comparable to rest of the cases. 

We implemented the Monte Carlo estimation methods described in Section \ref{sec:montecarlo} for Asian and Lookback options, which are both path-dependent \cite{Hull}. In an Asian option, the asset price $S_T$ at the time of maturity is replaced in the option payoff function with the arithmetic average of $\{ S_t : t=1, \ldots, N \}$. Therefore, in the binomial tree model, the payoff for an Asian put option is given by
\begin{equation}
V(\vec{x}) = \max\{K-S^*,0\},
\end{equation}
where $S^* = \frac{1}{N}\sum_{t=1}^{N}{S_t(\vec{x})}$, $S_t(\vec{x})$ is the asset value at time $t$ followed on the Bernoulli path $\vec{x}$. In a Lookback option, either the strike price $K$ or the asset price $S_T$ at the time of maturity are replaced in the payoff function by the maximum or minimum of $\{S_t\}$ respectively. Here we consider a Fixed Lookback put option, whose payoff is given by
\begin{equation*}
V(\vec{x}) = \max\{K-S^*,0\},
\end{equation*}
where $S^* = \min\{ S_t(\vec{x}) : t=1, \ldots, N \}$. We implemented the basic MC estimate given in section \ref{sec:montecarlo} for the Asian and Fixed Lookback put options using the binomial tree model with size $N$ to study the convergence of the estimates to the exact expected value \eqref{eq:f13}. We further implemented Partitioned and Shared Sample MC from section \ref{sec:montecarlo} to study the variance reduction property. Table \ref{tab:mc_conv} shows basic MC estimates and corresponding variance estimates of an Asian put option and a Fixed Lookback put option with parameters $K=100$, $S=20$, $q=6\%$, $\sigma=3.0$, and $T=1$, using the binomial model with tree size $N=32$. Option values calculated by exact enumeration were 82.115 for the Asian put options and 93.196 for the Fixed Lookback put option. The sample size used for the MC estimation is increased from $2^9$ to $2^{16}$, which is less than $0.01\%$ of the total number of paths. As can be seen from Table \ref{tab:mc_conv}, MC estimates for both the options converge to their respective exact values. Also, as expected, the variance estimates decrease with increasing sample size $R$. Table \ref{tab:var_reduction_mc} shows the Partitioned MC estimates $\hat{V}_s$ for both the Asian and Fixed Lookback put options and corresponding variance estimates, using a total sample size of $R=1024$ and varying the number of processes between $1$ to $64$. Note that as the number of processes increase, the sample size per process $R_m$ decreases. The estimates shown in Table \ref{tab:var_reduction_mc} are averaged over $1000$ repetitions. As expected, Table \ref{tab:mc_conv} shows that variance estimates of the Partitioned MC estimator are mostly smaller than the corresponding basic MC estimator for $R=1024$. Table \ref{tab:smc_hmc} shows the comparison of variance estimates between the Partitioned and Shared Sample MC estimates for the Asian put option with $N=32$, $R = R_m = 1024$, and $m=0, \ldots, M-1$. Shared Sample MC estimates $\tilde{V}$ and corresponding variance estimates were calculated using the expressions given in section \ref{sec:montecarlo}. Again, the estimates in Table \ref{tab:smc_hmc} are averaged over $1000$ repetitions. The results show that if $R=R_m$, and $m=0, \ldots, M-1$, the Partitioned MC method reduces the variance of the estimator more than the Shared Sample method does, as expected from Theorem \ref{result:strathybrid}. The condition on the covariance between $V(\vec{z}_k,\vec{y})$ and $V(\vec{z}_l,\vec{y})$ for all $k,l \in \{ 0, \ldots, M-1 \}$ where $k\neq l$ is satisfied for the options considered here.

\begin{table}
	\centering
	\caption{Runtime for different number of time steps for \proglang{R} implementation. For $M=1$, $N=32$, since our program failed to run because of integer overflow, runtime is shown as N/A.}
	\label{tab:r_perf}
	\begin{tabular}{rrrrrrrrrrrrrrr}
		\hline
		\multicolumn{8}{l}{(a) Wall clock time in HH:MM:SS} \\
		\multicolumn{1}{r}{\makebox[0.10in][r]{$N$}} &
		\multicolumn{1}{r}{\makebox[0.01in][r]{$M = 1 $}} &
		\multicolumn{1}{r}{\makebox[0.01in][r]{$    2 $}} &
		\multicolumn{1}{r}{\makebox[0.01in][r]{$    4 $}} &
		\multicolumn{1}{r}{\makebox[0.01in][r]{$    8 $}} &
		\multicolumn{1}{r}{\makebox[0.01in][r]{$   16 $}} &
		\multicolumn{1}{r}{\makebox[0.01in][r]{$   32 $}} &
		\multicolumn{1}{r}{\makebox[0.01in][r]{$   64 $}} \\
		\hline
		16&  00:00:04&	00:00:02&	00:00:01&	$<$00:00:01&	$<$00:00:01&	$<$00:00:01&	$<$00:00:01\\
		20&  00:01:09&	00:00:43&	00:00:23&	00:00:12&	00:00:06&	00:00:03&	00:00:02\\
		24&  00:20:22&	00:12:41&	00:06:56&	00:03:35&	00:01:47&	00:00:54&	00:00:27\\
		28&  06:00:09&	03:41:26&	01:59:31&	01:02:16&	00:31:18&	00:16:02&	00:08:00\\
		32&  N/A     &	65:54:36&	35:18:59&	18:41:10&	10:38:06&	04:44:09 &	02:22:59\\
		\hline
		\hline
		\multicolumn{8}{l}{(b) Observed speedup $S_M$} \\
		\multicolumn{1}{r}{\makebox[0.10in][r]{$N$}} &
		\multicolumn{1}{r}{\makebox[0.01in][r]{$M = 1 $}} &
		\multicolumn{1}{r}{\makebox[0.01in][r]{$    2 $}} &
		\multicolumn{1}{r}{\makebox[0.01in][r]{$    4 $}} &
		\multicolumn{1}{r}{\makebox[0.01in][r]{$    8 $}} &
		\multicolumn{1}{r}{\makebox[0.01in][r]{$   16 $}} &
		\multicolumn{1}{r}{\makebox[0.01in][r]{$   32 $}} &
		\multicolumn{1}{r}{\makebox[0.01in][r]{$   64 $}} \\
		\hline
		16&  1.00&	1.60&	2.97&	5.55&	10.88&	17.69&	27.11\\
		20&  1.00&	1.59&	2.98&	5.72&	11.62&	22.86&	44.33\\
		24&  1.00&	1.60&	2.93&	5.68&	11.38&	22.62&	44.32\\
		28&  1.00&	1.62&	3.01&	5.78&	11.50&	22.45&	45.01\\
		32&  N/A &	2.00&	3.73&	7.05&	13.99&	28.21&	55.31\\
		\hline
		\hline
		\multicolumn{8}{l}{(c) Observed efficiency $E_M$} \\
		\multicolumn{1}{r}{\makebox[0.10in][r]{$N$}} &
		\multicolumn{1}{r}{\makebox[0.01in][r]{$M = 1 $}} &
		\multicolumn{1}{r}{\makebox[0.01in][r]{$    2 $}} &
		\multicolumn{1}{r}{\makebox[0.01in][r]{$    4 $}} &
		\multicolumn{1}{r}{\makebox[0.01in][r]{$    8 $}} &
		\multicolumn{1}{r}{\makebox[0.01in][r]{$   16 $}} &
		\multicolumn{1}{r}{\makebox[0.01in][r]{$   32 $}} &
		\multicolumn{1}{r}{\makebox[0.01in][r]{$   64 $}} \\
		\hline
		16&  1.00&	0.80&	0.74&	0.69&	0.68&	0.55&	0.42\\
		20&  1.00&	0.80&	0.75&	0.71&	0.72&	0.71&	0.69\\
		24&  1.00&	0.80&	0.73&	0.71&	0.71&	0.71&	0.69\\
		28&  1.00&	0.81&	0.75&	0.73&	0.72&	0.70&	0.70\\
		32&  N/A &	1.00&	0.93&	0.88&	0.87&	0.88&	0.86\\
		\hline
	\end{tabular}
\end{table}

\begin{figure}
	\centering
	\begin{tabular}{cc}
		\includegraphics[width=0.5\textwidth]{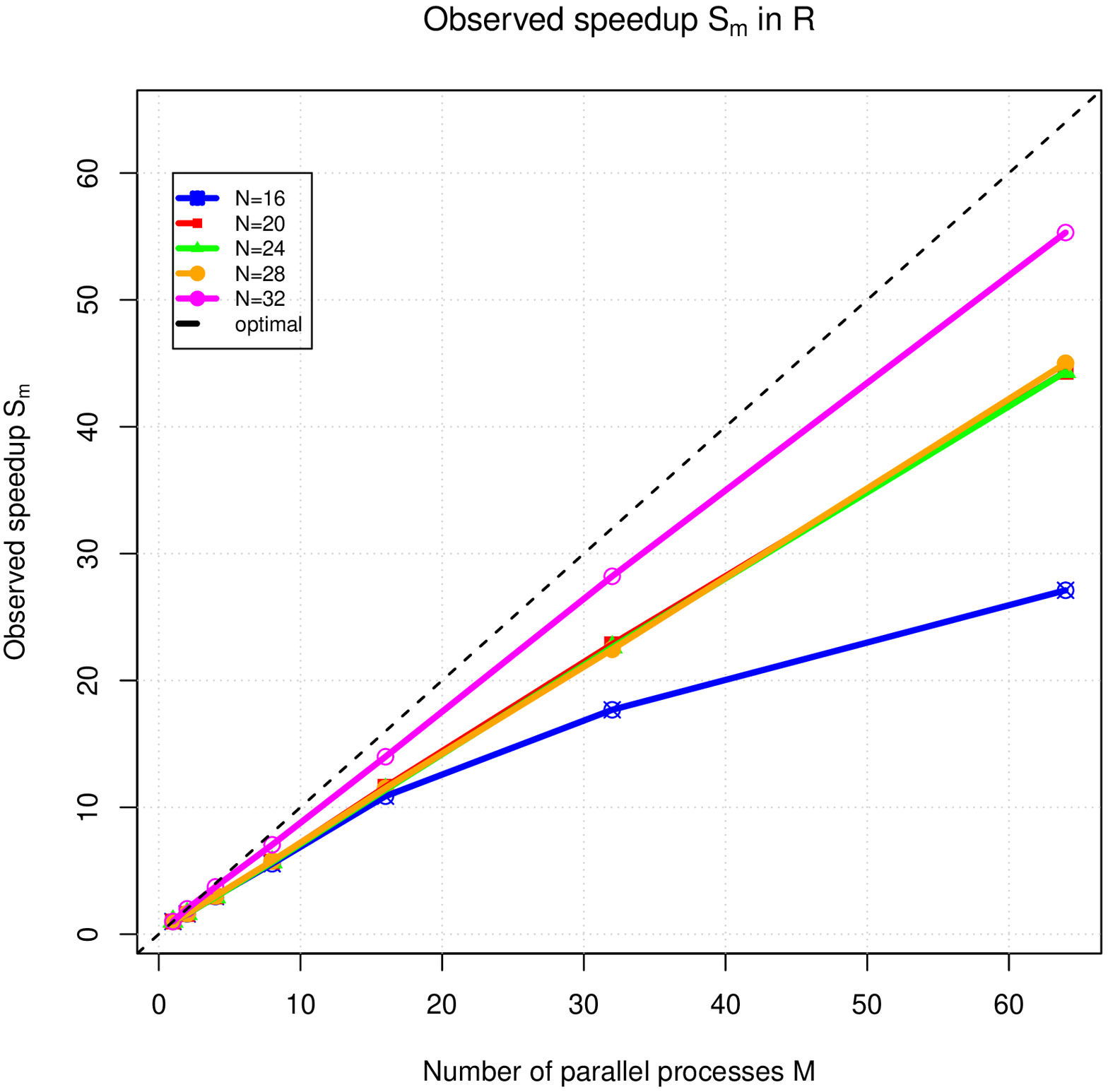} &
		\includegraphics[width=0.5\textwidth]{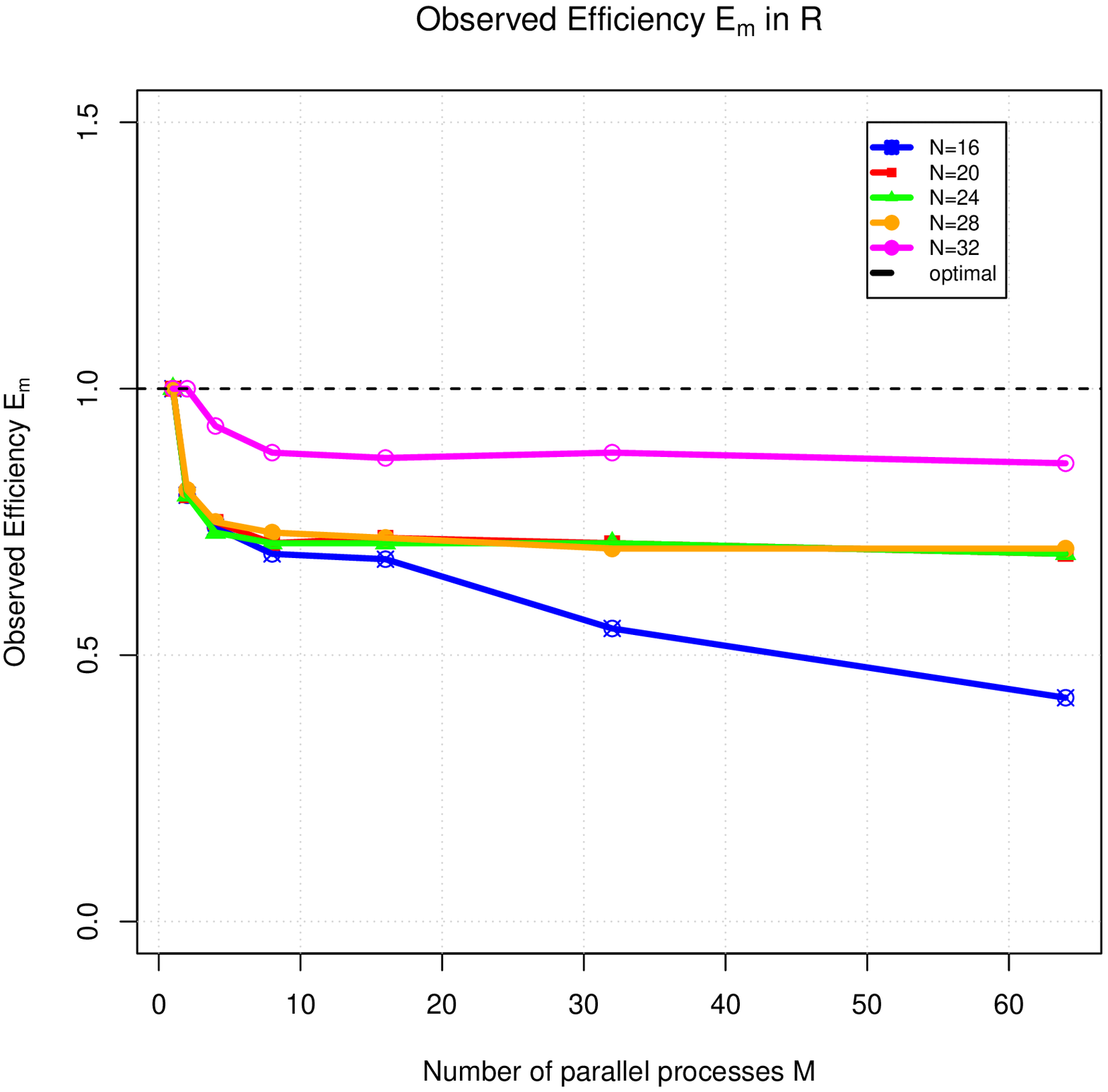} \\
		(a) & (b)
	\end{tabular}
	\caption{(a)~Speedup and (b)~Efficiency in \proglang{R}.}
	\label{fig:speedupR}
\end{figure}

\begin{table}
	\centering
	\caption{Runtime for different number of time steps for \proglang{Julia} implementation.}
	\label{tab:julia_perf}
	\begin{tabular}{rrrrrrrrrrrrrrr}
		\hline
		\multicolumn{8}{l}{(a) Wall clock time in HH:MM:SS} \\
		\multicolumn{1}{r}{\makebox[0.10in][r]{$N$}} &
		\multicolumn{1}{r}{\makebox[0.01in][r]{$M = 1 $}} &
		\multicolumn{1}{r}{\makebox[0.01in][r]{$    2 $}} &
		\multicolumn{1}{r}{\makebox[0.01in][r]{$    4 $}} &
		\multicolumn{1}{r}{\makebox[0.01in][r]{$    8 $}} &
		\multicolumn{1}{r}{\makebox[0.01in][r]{$   16 $}} &
		\multicolumn{1}{r}{\makebox[0.01in][r]{$   32 $}} &
		\multicolumn{1}{r}{\makebox[0.01in][r]{$   64 $}} \\
		\hline
		16&  $<$00:00:01&	$<$00:00:01&	$<$00:00:01&	$<$00:00:01&	$<$00:00:01&	$<$00:00:01&	$<$00:00:01\\
		20&  00:00:07&	00:00:05&	00:00:02&	$<$00:00:01&	$<$00:00:01&	$<$00:00:01&	$<$00:00:01\\
		24&  00:02:23&	00:01:15&	00:00:42&	00:00:22&	00:00:11&	00:00:06&	00:00:03\\
		28&  00:40:08&	00:21:05&	00:11:02&	00:05:56&	00:03:07&	00:01:34&	00:00:51\\
		32&  11:59:54&	06:23:01&	03:24:58&	01:46:06&	00:53:41&	00:27:18&	00:13:38\\
		\hline
		\hline
		\multicolumn{8}{l}{(b) Observed speedup $S_M$} \\
		\multicolumn{1}{r}{\makebox[0.10in][r]{$N$}} &
		\multicolumn{1}{r}{\makebox[0.01in][r]{$M = 1 $}} &
		\multicolumn{1}{r}{\makebox[0.01in][r]{$    2 $}} &
		\multicolumn{1}{r}{\makebox[0.01in][r]{$    4 $}} &
		\multicolumn{1}{r}{\makebox[0.01in][r]{$    8 $}} &
		\multicolumn{1}{r}{\makebox[0.01in][r]{$   16 $}} &
		\multicolumn{1}{r}{\makebox[0.01in][r]{$   32 $}} &
		\multicolumn{1}{r}{\makebox[0.01in][r]{$   64 $}} \\
		\hline
		16&  1.00&	1.57& 2.19&	2.74&	3.36&	4.36&	4.04\\
		20&  1.00&	1.80& 3.16&	6.18&	11.10&	17.65&	25.05\\
		24&  1.00&	1.91& 3.36&	6.37&	13.02&	24.76&	47.79\\
		28&  1.00&	1.90& 3.64&	6.76&	12.90&	25.71&	47.06\\
		32&  1.00&	1.88& 3.52&	6.76&	13.44&	26.37&	52.58\\
		\hline
		\hline
		\multicolumn{8}{l}{(c) Observed efficiency $E_M$} \\
		\multicolumn{1}{r}{\makebox[0.10in][r]{$N$}} &
		\multicolumn{1}{r}{\makebox[0.01in][r]{$M = 1 $}} &
		\multicolumn{1}{r}{\makebox[0.01in][r]{$    2 $}} &
		\multicolumn{1}{r}{\makebox[0.01in][r]{$    4 $}} &
		\multicolumn{1}{r}{\makebox[0.01in][r]{$    8 $}} &
		\multicolumn{1}{r}{\makebox[0.01in][r]{$   16 $}} &
		\multicolumn{1}{r}{\makebox[0.01in][r]{$   32 $}} &
		\multicolumn{1}{r}{\makebox[0.01in][r]{$   64 $}} \\
		\hline
		16&  1.00&	0.78& 0.55&	0.34&	0.21&	0.13&	0.06\\
		20&  1.00&	0.90& 0.79&	0.77&	0.69&	0.55&	0.39\\
		24&  1.00&	0.95& 0.84&	0.79&	0.81&	0.77&	0.75\\
		28&  1.00&	0.95& 0.91&	0.86&	0.81&	0.80&	0.73\\
		32&  1.00&	0.94& 0.88&	0.85&	0.84&	0.82&	0.82\\
		\hline
	\end{tabular}
\end{table}

\begin{figure}
	\centering
	\begin{tabular}{cc}
		\includegraphics[width=0.5\textwidth]{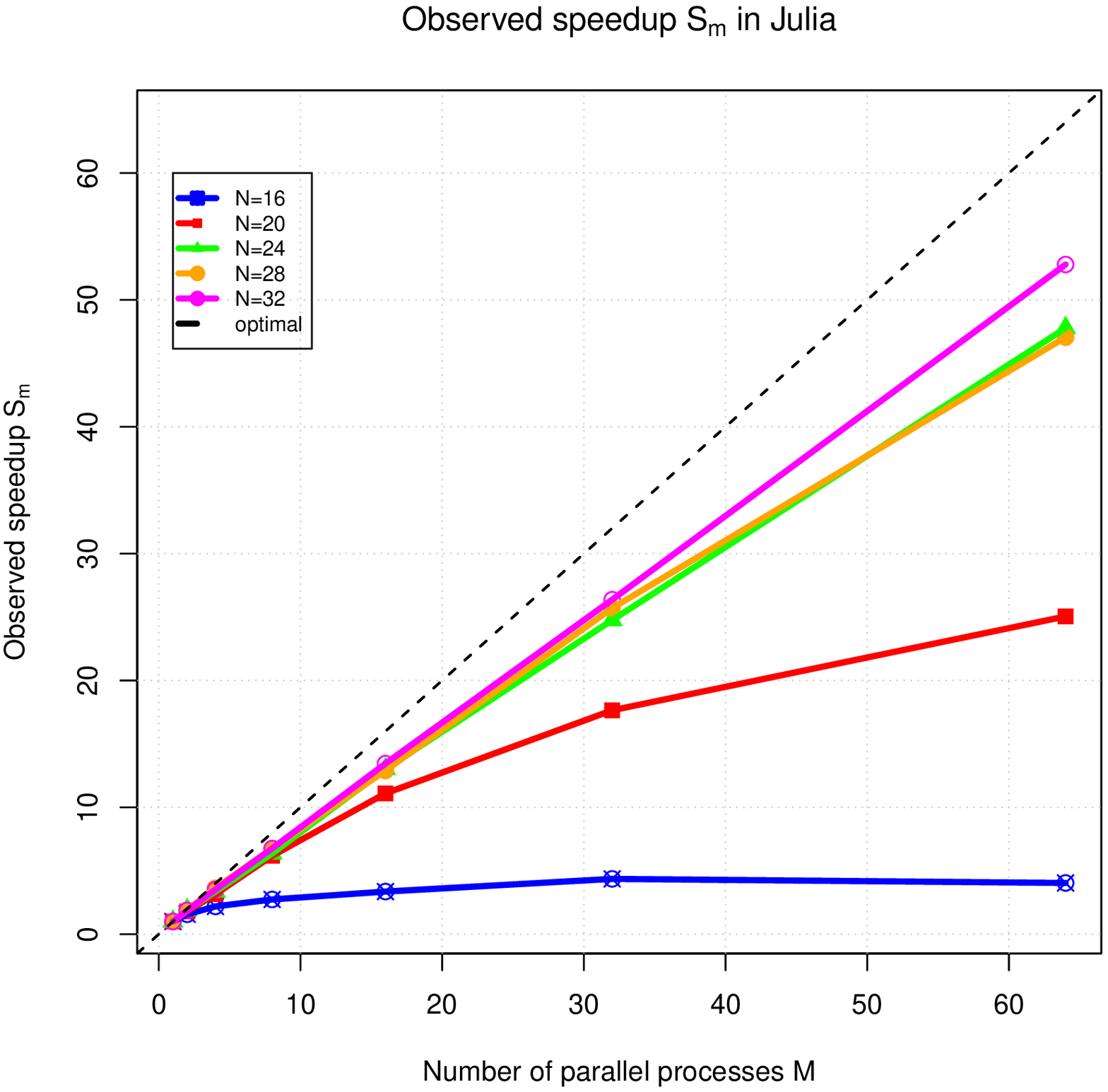} &
		\includegraphics[width=0.5\textwidth]{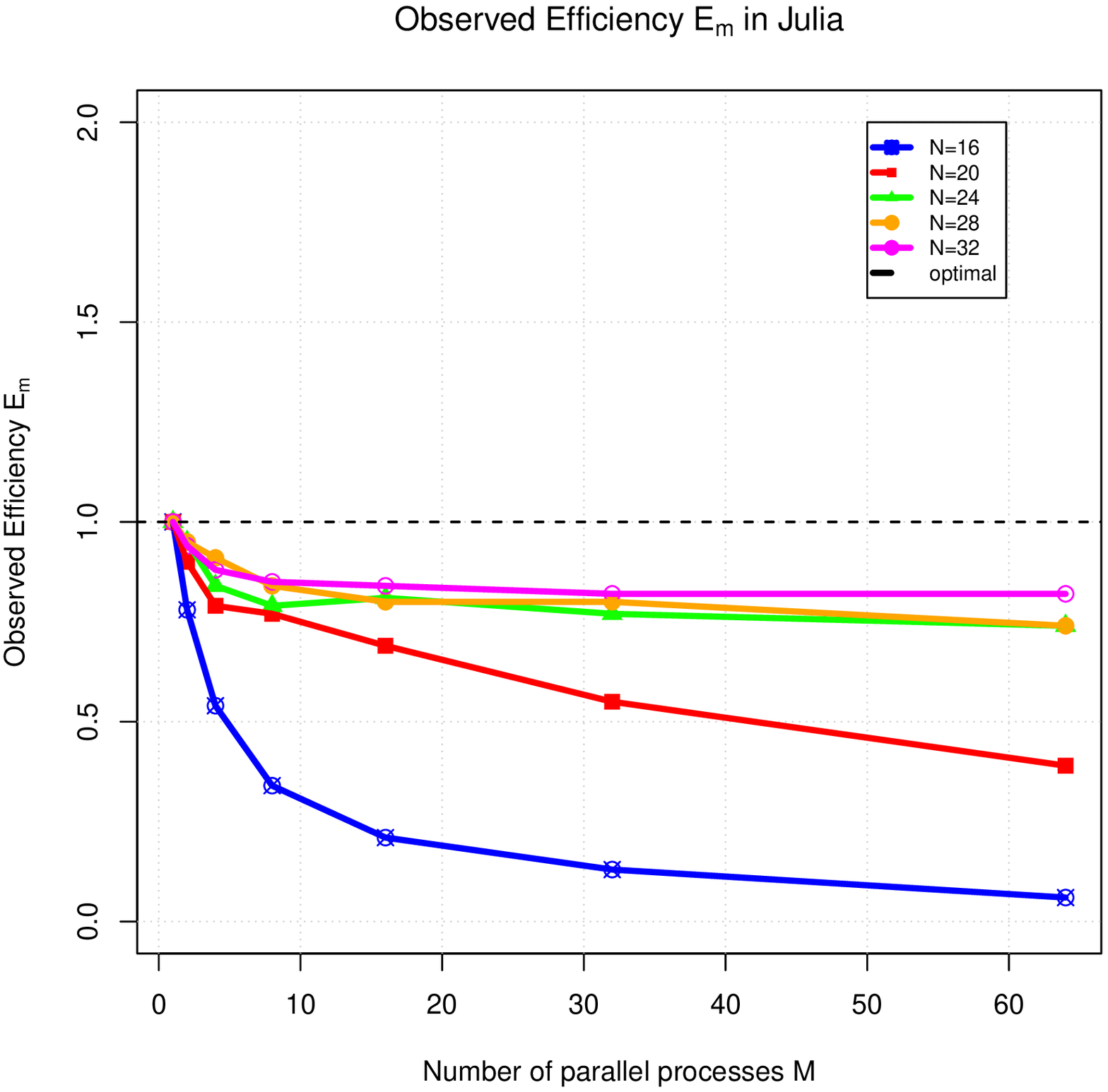} \\
		(a) & (b)
	\end{tabular}
	\caption{(a)~Speedup and (b)~Efficiency in \proglang{Julia}.}
	\label{fig:speedupJulia}
\end{figure}

\begin{table}
	\centering
	\caption{Monte Carlo estimates and corresponding variance estimates for Asian and Fixed Lookback put options with $N=32$. Exact value of the Asian option is $82.115$ and the Lookback option is $93.196$.}
	\label{tab:mc_conv}
	\begin{tabular}{lc|rrrrrrrrrrrrr}
		\hline
		\multicolumn{1}{l}{Option} &
		\multicolumn{1}{l|}{Estimate} &
		\multicolumn{1}{r}{$R = 2^9  $} &
		\multicolumn{1}{r}{$    2^{10}$} &
		\multicolumn{1}{r}{$    2^{11}$} &
		\multicolumn{1}{r}{$    2^{12}$} &
		\multicolumn{1}{r}{$    2^{13}$} &
		\multicolumn{1}{r}{$    2^{14}$} &
		\multicolumn{1}{r}{$    2^{15}$} &
		\multicolumn{1}{r}{$    2^{16}$} \\
		\hline
		Asian & $\hat{V}$ & 82.857&  82.514&	83.425&	83.181&	82.821&	82.615&	82.566& 82.524\\
		& $\widehat{\Var}(\hat{V})$ & 0.735 & 0.362& 0.179& 0.095& 0.050& 0.022& 0.011& 0.006\\
		Lookback & $\hat{V}$ & 93.156&  93.237&	93.312&	93.324&	93.262&	93.236&	93.234&	93.222\\
		& $\widehat{\Var}(\hat{V})$ & 0.022 & 0.009& 0.005& 0.002& 0.001& \textless 0.001& \textless 0.001& \textless 0.001\\
		\hline
	\end{tabular}
\end{table}

\begin{table}
	\centering
	\caption{Partitioned Monte Carlo estimates and corresponding variance estimates for Asian and Fixed Lookback put options with $N=32$ and $R=2^{10}$.}
	\label{tab:var_reduction_mc}
	\begin{tabular}{lc|rrrrrrrrrrrrr}
		\hline
		\multicolumn{1}{l}{Option} &
		\multicolumn{1}{c|}{Estimate} &
		\multicolumn{1}{l}{$ M  =1 $} &
		\multicolumn{1}{r}{$     2 $} &
		\multicolumn{1}{r}{$     4 $} &
		\multicolumn{1}{r}{$     8 $} &
		\multicolumn{1}{r}{$     16$} &
		\multicolumn{1}{r}{$     32$} &
		\multicolumn{1}{r}{$     64$} \\
		&&
		\multicolumn{1}{l}{$R_m = 2^{10}$} &
		\multicolumn{1}{r}{$      2^{9}$} &
		\multicolumn{1}{r}{$      2^{8}$} &
		\multicolumn{1}{r}{$      2^{7}$} &
		\multicolumn{1}{r}{$      2^{6}$} &
		\multicolumn{1}{r}{$      2^{5}$} &
		\multicolumn{1}{r}{$      2^{4}$} \\
		\hline
		Asian & $\hat{V}_s$& 82.077&  82.217&  82.101& 82.232&	81.936&	82.296&	82.165\\
		& $   \widehat{\Var}(\hat{V}_s)$& 0.367& 0.332& 0.315& 0.272& 0.263& 0.212& 0.194\\
		Lookback & $\hat{V}_s$& 93.196& 93.201&	93.171&	93.197&	93.187&	93.216 & 93.205\\
		& $   \widehat{\Var}(\hat{V}_s)$& 0.010& 0.009& 0.008& 0.008& 0.008& 0.007& 0.007\\
		\hline
	\end{tabular}
\end{table}

\begin{table}
	\centering
	\caption{Comparison of the variance estimates from the Partitioned and Shared Sample Monte Carlo methods with $N=32$ and $R=1024$ for Asian put option.}
	\label{tab:smc_hmc}
	\begin{tabular}{lc|rrrrrrrrrrrrr}
		\hline
		\multicolumn{1}{l}{Method} &
		\multicolumn{1}{c|}{Estimate} &
		\multicolumn{1}{r}{$   M = 1$} &
		\multicolumn{1}{r}{$       2$} &
		\multicolumn{1}{r}{$       4$} &
		\multicolumn{1}{r}{$       8$} &
		\multicolumn{1}{r}{$       16$} &
		\multicolumn{1}{r}{$       32$} &
		\multicolumn{1}{r}{$       64$} \\
		\hline
		Partitioned MC & $\hat{V}_s$ & 82.201&  82.160&	82.101& 82.127& 82.109&	82.120&	82.108\\
		& $\widehat{\Var}(\hat{V}_s)$ & 0.367 & 0.220& 0.062& 0.027& 0.005& 0.005& 0.002\\
		Shared Sample MC & $\tilde{V}$ & 82.113&  82.112&	82.120&	82.093&	82.102&	82.124&	82.108\\
		& $\widehat{\Var}(\tilde{V})$ & 0.373 & 0.305& 0.267& 0.203& 0.170& 0.142& 0.123\\
		\hline
	\end{tabular}
\end{table}

\section{Concluding Remarks}
\label{sec:conclusion}
We have presented a method to transform the computation of the expected value in a recombinant binomial tree into an embarrassingly parallel problem by mapping the Bernoulli paths in the tree to the processes on a multiprocessor computer. We also discussed a parallel Monte Carlo estimation method which takes advantage of this partitioning. The methods were implemented both in \proglang{R} and \proglang{Julia}, and were applied to value path-dependent options. Numerical results verify the convergence of the parallel Monte Carlo method and variance reduction with respect to basic Monte Carlo estimation. Performance results indicate that the \proglang{Julia} implementation was significantly faster and more efficient than the \proglang{R} implementation, likely because of the superior handling of loops and the compilation to machine-level code.

\section*{Acknowledgments}
The first author acknowledges financial support from the UMBC High Performance Computing Facility (HPCF) at the University of Maryland, Baltimore County (UMBC). The hardware used in the computational studies is part of HPCF. The facility is supported by the U.S. National Science Foundation through the MRI program (grant no.~CNS--0821258 and CNS--1228778) and the SCREMS program (grant no.~DMS--0821311), with additional substantial support from the University of Maryland, Baltimore County (UMBC). See \url{hpcf.umbc.edu} for more information on HPCF and the projects using its resources.

\bibliographystyle{tfnlm}
\bibliography{binom-references}

\begin{thebibliography}{10}
\providecommand{\url}[1]{\normalfont{#1}}
\providecommand{\urlprefix}{Available from: }

\bibitem{CoxRossRubinstein1979}
Cox~JC, Ross~SA, Rubinstein~M. Option pricing: A simplified approach. Journal
  of Financial Economics. 1979;\hspace{0pt}7(3):229--263.
  \urlprefix\url{http://www.sciencedirect.com/science/article/pii/0304405X79900151}.

\bibitem{Hull}
Hull~JC. Options, futures, and other derivatives. Prentice Hall; 2003.

\bibitem{Seydel}
Seydel~R. Tools for computational finance. Springer; 2003.

\bibitem{Little1993}
Little~RJA. Pattern-mixture models for multivariate incomplete data. Journal of
  the American Statistical Association. 1993;\hspace{0pt}88(421):125--134.

\bibitem{HosseiniJSM2016}
Hosseini~M, Neerchal~NK, Gruber-Baldini~AL. Statistical modeling of subject and
  proxy observations using weighted {GEE}. In: JSM Proceedings, Section on
  Statistics in Epidemiology. Alexandria, VA: American Statistical Association;
  2016. p. 1101--1111.

\bibitem{Glasserman2003}
Glasserman~P. Monte carlo methods in financial engineering. 1st ed.
  Springer-Verlag; 2003.

\bibitem{Pacheco}
Pacheco~PS. Parallel programming with mpi. Morgan Kaufmann; 1997.

\bibitem{Foster1995}
Foster~I. Designing and building parallel programs: Concepts and tools for
  parallel software engineering. Addison-Wesley; 1995.

\bibitem{Ganesan}
Ganesan~N, Chamberlain~RD, Buhler~J. Acceleration of binomial options pricing
  via parallelizing along time-axis on a {GPU}. Proc of Symp on Application
  Accelerators in High Performance Computing. 2009;\hspace{0pt}.

\bibitem{Nvidia}
Kolb~C, Pharr~M. Option pricing on the {GPU} in {GPU} gems 2. Addison-Wesley;
  2005.

\bibitem{Popuri2013}
Popuri~SK, Raim~AM, Neerchal~NK, et~al. An implementation of binomial method of
  option pricing using parallel computing. UMBC High Performance Computing
  Facility, University of Maryland, Baltimore County; 2013.  Technical Report
  HPCF--2013--1.
  \urlprefix\url{http://userpages.umbc.edu/~gobbert/papers/PopuriEtAl_Binomial.pdf}.

\bibitem{Swarztrauber1989}
Swarztrauber~PN, Sweet~RA. Vector and parallel methods for the direct solution
  of {P}oisson's equation. Journal of Computational and Applied Mathematics.
  1989;\hspace{0pt}27:241--263.

\bibitem{Boyle1986}
Boyle~PP. Option {V}aluation {U}sing a {T}hree-{J}ump {P}rocess. International
  Options Journal. 1986;\hspace{0pt}3:7--12.

\bibitem{Rubinstein}
Rubinstein~RY, Kroese~DP. Simulation and the {M}onte {C}arlo {M}ethod. John
  Wiley and Sons; 2008.

\bibitem{GabrielEtAl2004}
Gabriel~E, Fagg~GE, Bosilca~G, et~al. {Open {MPI}: Goals, Concept, and Design
  of a Next Generation {MPI} Implementation}. In: {Proceedings, 11th European
  {PVM/MPI} Users Group Meeting}; Sept; Budapest, Hungary; 2004.

\bibitem{Rwebsite}
{R Core Team}. R: A language and environment for statistical computing. Vienna,
  Austria: R Foundation for Statistical Computing; 2012. {ISBN} 3-900051-07-0;
  \urlprefix\url{http://www.r-project.org}.

\bibitem{Yu2002}
Yu~H. {Rmpi}: Parallel statistical computing in {R}. R News.
  2002;\hspace{0pt}2(2):10--14.
  \urlprefix\url{http://cran.r-project.org/doc/Rnews/Rnews_2002-2.pdf}.

\bibitem{pbdR2012}
Ostrouchov~G, Chen~WC, Schmidt~D, et~al. Programming with big data in {R} ;
  2012. \urlprefix\url{http://r-pbd.org/}.

\bibitem{Rcpp}
Eddelbuettel~D, Fran\c{c}ois~R. {Rcpp}: Seamless {R} and {C++} integration.
  Journal of Statistical Software. 2011;\hspace{0pt}40(8):1--18.
  \urlprefix\url{http://www.jstatsoft.org/v40/i08/}.

\bibitem{Julialang}
Bezanson~J, Edelman~A, Karpinski~S, et~al. {J}ulia: A fresh approach to
  numerical computing ; 2014. \url{arXiv:1411.1607}.

\bibitem{LLVM}
Lattner~C, Adve~V. {LLVM: A Compilation Framework for Lifelong Program Analysis
  \& Transformation}. In: {Proceedings of the 2004 International Symposium on
  Code Generation and Optimization (CGO'04)}; Mar; Palo Alto, California; 2004.

\bibitem{MPIJulia}
Noack~A. Julia - {MPI} package [\url{https://github.com/JuliaParallel/MPI.jl}];
  2016.

\end{thebibliography}

\end{document}